\newcommand{\be}{\begin{equation}}
\newcommand{\ee}{\end{equation}}
\newcommand{\ba}{\begin{eqnarray}}
\newcommand{\ea}{\end{eqnarray}}
\newtheorem{theorem}{Theorem}
\newtheorem{proposition}{Proposition}
\newtheorem{remark}{Remark}
\def\>{\rangle}
\def\<{\langle}
\begin{document}

\title{On the state space structure of tripartite quantum systems}

\author{Hari krishnan S V}
\thanks{These authors have contributed equally.}   
\affiliation{School of Physics, IISER Thiruvanathapuram, Vithura, Kerala 695551, India.}

\author{Ashish Ranjan}
\thanks{These authors have contributed equally.}    
\affiliation{School of Physics, IISER Thiruvanathapuram, Vithura, Kerala 695551, India.}

\author{Manik Banik}
\affiliation{School of Physics, IISER Thiruvanathapuram, Vithura, Kerala 695551, India.}

\begin{abstract}
State space structure of tripartite quantum systems is analyzed. In particular, it has been shown that the set of states separable across all the three bipartitions [say $\mathcal{B}^{int}(ABC)$] is a strict subset of the set of states having positive partial transposition (PPT) across the three bipartite cuts [say $\mathcal{P}^{int}(ABC)$] for all the tripartite Hilbert spaces $\mathbb{C}_A^{d_1}\otimes\mathbb{C}_B^{d_2}\otimes\mathbb{C}_C^{d_3}$ with $\min\{d_1,d_2,d_3\}\ge2$. The claim is proved by constructing state belonging to the set $\mathcal{P}^{int}(ABC)$ but not belonging to $\mathcal{B}^{int}(ABC)$. For $(\mathbb{C}^{d})^{\otimes3}$ with $d\ge3$, the construction follows from specific type of multipartite unextendible product bases. However, such a construction is not possible for $(\mathbb{C}^{2})^{\otimes3}$ since for any $n$ the bipartite system $\mathbb{C}^2\otimes\mathbb{C}^n$ cannot have any unextendible product bases [\href{https://doi.org/10.1103/PhysRevLett.82.5385}{Phys. Rev. Lett. {\bf 82}, 5385 (1999)}]. For the $3$-qubit system we, therefore, come up with a different construction.    
\end{abstract}

\maketitle
 
\section{Introduction}
Hilbert space quantum mechanics provides an extremely precise mathematical description for microscopic phenomena. It associates tensor product Hilbert spaces with composite quantum systems and results in entangled quantum states which have no analog in classical physics \cite{EPR35,Schrdinger1935,Schrdinger1936}. The advent of quantum information theory \cite{Chuang00} identifies several important applications of entanglement (see \cite{Ghne2009,Horodecki2009} and references therein). Characterization, identification, and quantification of quantum entanglement are thus questions of great practical interest. One of the most widely used tests for bipartite states' entanglement verification is the positive partial transposition (PPT) criterion. States having negative partial transposition (NPT) are always entangled \cite{Peres96}, whereas PPT implies separability only for the systems with composite dimensions no more than six \cite{Horodecki1996}.  In other words, being PPT is a necessary and sufficient condition for separability for the composite systems with dimensions $\leq 6$. Consequently, for higher dimensional systems one has several hierarchical convex-compact subsets of states within the set of allowed quantum states. Identifying these subsets as well as their boundaries is essential to understand the intricacy of quantum state space structure and entanglement properties of the quantum states. Having the convex-compact structures, these sets allow the classic Minkowski-Hahn–Banach separation theorem to characterize their several important features \cite{Boyd04}.    

The complexity of the situation increases rapidly with an increase in the number of component subsystems comprising the composite system \cite{Eggeling01}. For instance, separability/ PPT-ness can be considered across different bipartite cuts and accordingly one ends up with different convex-compact subsets of states. Several intriguing structures consequently emerge. For instance, a $3$-qubit state may not be fully separable even if it is separable across every possible bipartite cut and hence contains multipartite entanglement \cite{Bennett99}. On the other hand, for such a system the set of biseparable states is strictly contained within the PPT mixture set \cite{Ha16}. In this work,  we show that the set of states that are separable across all possible bipartitions are strictly contained within the set of states that are PPT across all possible bipartite cuts. We prove this claim for an arbitrary tripartite Hilbert space $\mathbb{C}^{d_1}\otimes\mathbb{C}^{d_2}\otimes\mathbb{C}^{d_3}$ with $\min\{d_1,d_2,d_3\}\ge2$. The proof is constructive. We construct states that are PPT across all three bipartitions but inseparable across some bipartite cuts. Construction for $(\mathbb{C}^{d})^{\otimes3}$ with $d\ge3$ follows from a specific kind of unextendible product bases. However, such a construction is not possible for $3$-qubit Hilbert space as there cannot be any set of mutually orthogonal product states in $\mathbb{C}^2\otimes\mathbb{C}^n$ that are not completable \cite{Bennett99(1),DiVincenzo03}. To prove the claim for $3$-qubit system we, therefore, come up with a different construction.  

The manuscript is organized as follows: in section \ref{sec2}, we note down the notations and recall some preliminary results; all our findings are listed in section \ref{sec3}; finally in section \ref{sec4} we put our conclusions along with some open questions for further research.

\section{Notations and Preliminaries}\label{sec2}
A quantum system $S$ is associated with a complex separable Hilbert space. $\mathcal{H}_S$ over complex field \cite{Kraus83,Chuang00}. Throughout this work, we will consider only finite dimensional systems and hence $\mathcal{H}_S$ will be isomorphic to some complex space $\mathbb{C}^d_S$, with $d$ being the dimension of the complex vector space. The system's state is described by a density operator $\rho_S$ (positive operator with unit trace) acting on $\mathbb{C}^d_S$. Collection of density operators form a convex compact set $\mathcal{D}(S)$ embedded in $\mathbb{R}^{d^2-1}$. We will also sometimes specify this set as $\mathcal{D}(d)$ to distinguish between systems with different dimensions.  The extreme points of $\mathcal{D}(S)$ are called pure states, and they satisfy $\rho^2_S=\rho_S$. Let $\mathcal{E}_{\mathcal{D}}(S)$ denotes the set of all extremal points of $\mathcal{D}(S)$. Such an extremal state can also be considered as a rank$-1$ projector, {\it i.e.} $\rho_S=\ket{\psi}_S\bra{\psi}$ for some $\ket{\psi}_S\in\mathbb{C}^d_S$ whenever $\rho_S\in\mathcal{E}_{\mathcal{D}}(S)$. States that are not pure are called mixed states and they allow convex decomposition in terms of pure states, {\it i.e.} $\forall~\rho_S\not\in\mathcal{E}_{\mathcal{D}}(S),~\exists$ $\sigma^i_S\in\mathcal{E}_{\mathcal{D}}(S)$ s.t $\rho_S=\sum p_i\sigma^i_S$, where $p_i>0~\&~\sum p_i=1$.

Hilbert space $\mathbb{C}_{AB\cdots}$ of a composite system consisting of component subsystems $A,B,\cdots$ is given by the tensor product of the Hilbert spaces associated with the component subsystems, {\it i.e.} $\mathbb{C}^d_{AB\cdots}=\mathbb{C}^{d_A}_{A}\otimes\mathbb{C}^{d_B}_{B}\otimes\cdots$. Here $d_A,d_B,\cdots$ denotes the dimension of the component subsystems while the dimension of the composite system is $d=d_A\times d_B\times\cdots$. While the axiomatic formulations of quantum mechanics contain this tensor product postulate \cite{Neumann55,Dirac66,Jauch68}, recent developments indicate that this assumption can be logically derived from the state postulate and the measurement postulate \cite{Carcassi21}. 

A bipartite state $\rho_{AB}\in\mathcal{D}(AB)$ is called a pure product state if and only if $\rho_{AB}=\ket{\chi}_{AB}\bra{\chi}$, where $\ket{\chi}_{AB}=\ket{\psi}_A\otimes\ket{\phi}_B$ for some $\ket{\psi}_A\in\mathbb{C}_A~\&~\ket{\phi}_B\in\mathbb{C}_B$. The convex hull of these product states will be denoted as $\mathcal{S}(AB)$, and the states in $\mathcal{S}(AB)$ are generally called separable states. States that are not separable are called entangled, {\it i.e.} an entangled state $\rho^{ent}_{AB}\in\mathcal{D}(AB)$ but $\rho^{ent}_{AB}\not\in\mathcal{S}(AB)$. While $\mathcal{E}_{\mathcal{D}}(AB)$ consists of pure product states and pure entangled states, $\mathcal{E}_{\mathcal{S}}(AB)$ consists of only pure product states. For a bipartite system one can define another convex compact set, namely the Peres set $\mathcal{P}(AB)$: the set of states with positive-partial-transpose (PPT) \cite{Peres96}. A bipartite state $\rho_{AB}$ belongs to the set $\mathcal{P}(AB)$ whenever $\rho_{AB}^{\mathrm{T}_k}\ge0$, where $T_k$ denotes (partial) transposition with respect to the $k^{th}$ subsystem with $k\in\{A,B\}$. The following set inclusion relations are immediate:
\begin{align}\label{eq1}
\mathcal{S}(AB)\subseteq\mathcal{P}(AB)\subsetneq\mathcal{D}(AB).
\end{align}
Equality between the first two sets holds only for the composite system of dimension no more than $6$ \cite{Horodecki1996}. For the higher dimensions $\mathcal{S}(AB)$ is known to be a proper subset of $\mathcal{P}(AB)$ \cite{Horodecki97,Bennett99,Brub00,DiVincenzo03}. A bipartite state $\rho_{AB}^{ppt}$ is PPT-entangled whenever $\rho_{AB}^{ppt}\in\mathcal{P}(AB)$ but $\rho_{AB}^{ppt}\not\in\mathcal{S}(AB)$. The PPT entangled states exhibit an intriguing irreversible feature: their preparation under local quantum operations assisted with classical communications (LOCC) requires non-zero amount of maximally entangled state to be shared between the subsystems \cite{Horodecki98}, but no maximally entangled state can be distilled from them under LOCC \cite{Vidal01,Horodecki05}. Despite being undistillable, PPT entangled states find several applications, such as activating entanglement distillation for other entangled states \cite{Horodecki99,Vollbrecht02}, multipartite entanglement manipulation  \cite{Ishizaka05}, information processing \cite{Masanes06}, 
private key distillation \cite{Oppenheim05,Horodecki08,Horodecki09}, and quantum metrology \cite{Czekaj15}. Along with the set inclusion relations in Eq.(\ref{eq1}) it also immediately follows that,       
\begin{align}
\mathcal{E}_{\mathcal{S}}(AB)\subseteq\mathcal{E}_{\mathcal{P}}(AB)\subsetneq\mathcal{E}_{\mathcal{D}}(AB).  
\end{align}
It is important to note that a state belonging either in $\mathcal{E}_{\mathcal{S}}(AB)$ or in $\mathcal{E}_{\mathcal{D}}(AB)$ must be a pure state. However, for systems with composite dimension more than $6$ the set $\mathcal{E}_{\mathcal{S}}(AB)$ is strictly contained within $\mathcal{E}_{\mathcal{P}}(AB)$ and contains $\mathcal{E}_{\mathcal{S}}(AB)$ as a proper subset. Therefore, for such higher dimensional systems $\mathcal{E}_{\mathcal{P}}(AB)$ must contain some {\it nontrivial} mixed states along with the pure product states belonging in $\mathcal{E}_{\mathcal{S}}(AB)$. For instance, the PPT entangled state of $\mathbb{C}^3\otimes\mathbb{C}^3$ as constructed in \cite{Bennett99} lies in $\mathcal{E}_{\mathcal{P}}(3\otimes3)$ \cite{Chen2011}. Examples of such extremal states for $\mathcal{E}_{\mathcal{P}}(n\otimes n)$ with odd $n~\&~n\ge3$ can be found in \cite{Halder19}. There are in-fact some efficient methods in literature to check extremality of $\mathcal{P}(AB)$ \cite{Leinaas07,Leinaas10,Leinaas10(1),Augusiak10}.   

Moving to the tripartite system, a pure state $\ket{\chi}_{ABC}$ is called fully product if it is of the form $\ket{\chi}_{ABC}=\ket{\psi}_A\otimes\ket{\phi}_B\otimes\ket{\eta}_C$ for some $\ket{\psi}_A\in\mathbb{C}^{d_A}_A,~\ket{\phi}_B\in\mathbb{C}^{d_B}_B ~\&~\ket{\eta}_C\in\mathbb{C}^{d_C}_C$. The convex hull of pure product states will be denoted as $\mathcal{F}(ABC)$, and a state belonging to this set is generally called fully separable state. A tripartite state is called biseparable across A|BC cut if it of the form $\ket{\chi}_{ABC}=\ket{\psi}_A\otimes\ket{\phi}_{BC}$, where the state $\ket{\phi}_{BC}$ is allowed to be entangled across B|C cut. Convex hull of the states biseparable across A|BC cut will be denoted as $\mathcal{B}(A|BC)$. Similarly, one can define the sets $\mathcal{B}(B|CA)$ and $\mathcal{B}(C|AB)$ that are biseparable across B|CA cut and C|AB cut, respectively. A state belonging in the convex hull of the sets $\mathcal{B}(A|BC),~\mathcal{B}(B|CA) ~\&~\mathcal{B}(C|AB)$ is generally called biseparable, and we denote this set as $\mathcal{B}^{ch}(ABC)$, {\it i.e.}
\begin{align}
\mathcal{B}^{ch}(ABC):=\mbox{Convex~Hull}&\left\{\mathcal{B}(A|BC),\mathcal{B}(B|CA),\right.\nonumber\\
&\left.~~~~~~~\mathcal{B}(C|AB)\right\}. 
\end{align}
We can also consider the intersection of these three sets, which we will denote as $\mathcal{B}^{int}(ABC)$, {\it i.e.}
\begin{align}
\mathcal{B}^{int}(ABC):=\mathcal{B}(A|BC)\cap\mathcal{B}(B|CA)\cap\mathcal{B}(C|AB).    
\end{align}
Analogously, one can define the convex sets of PPT states across a given cut and then can consider the convex hull as well as the intersection of different such sets. We will respectively denote them as $\mathcal{P}(A|BC),~\mathcal{P}(B|CA),~\mathcal{P}(C|AB),~\mathcal{P}^{ch}(ABC)$, and $\mathcal{P}^{int}(ABC)$, where
\begin{subequations}
\begin{align}
\mathcal{P}^{ch}(ABC)&:=\mbox{Convex~Hull}\left\{\mathcal{P}(A|BC),\mathcal{P}(B|CA),\right.\nonumber\\
&\left.~~~~~~~\mathcal{P}(C|AB)\right\};\\
\mathcal{P}^{int}(ABC)&:=\mathcal{P}(A|BC)\cap\mathcal{P}(B|CA)\cap\mathcal{P}(C|AB)
\end{align}
\end{subequations}
In the present work, we aim to explore the set inclusion relations among these different convex sets.    

\section{Results}\label{sec3}
Let us first recall some already known structures. It seems tempting to assume that the set $\mathcal{B}^{int}(ABC)$ should be identical to the set of fully separable states $\mathcal{F}(ABC)$. Quite surprisingly, even for the simplest case of $3$-qubit system, this intuition is not correct. It turns out that $\mathcal{F}(2\otimes2\otimes2)$ is a strict subset of $\mathcal{B}^{int}(2\otimes2\otimes2)$. Example of a state belonging in $\mathcal{B}^{int}(2\otimes2\otimes2)$ but not belonging in $\mathcal{F}(2\otimes2\otimes2)$ follows from the construction of unextendible product basis (UPB) in $(\mathbb{C}^2)^{\otimes3}$ \cite{Bennett99}: 
\begin{align}\label{ShiftsUPB}
\rotatebox[origin=c]{0}{$\mathcal{U}^{{\bf Shifts}}_{PB}\equiv$}
\left\{\!\begin{aligned}
\ket{S_1}&:=\ket{0,1,+},~~\ket{S_2}:=\ket{1,+,0}\\
\ket{S_3}&:=\ket{+,0,1},~~\ket{S_4}:=\ket{-,-,-}
\end{aligned}\right\},	
\end{align}
where $\ket{\pm}:=\frac{1}{\sqrt{2}}(\ket{0}\pm\ket{1})$ and $\ket{x,y,z}\in(\mathbb{C}^2)^{\otimes3}$ stands as a short hand notation for $\ket{x}_A\otimes\ket{y}_B\otimes\ket{z}_C$. Let us consider the $3$-qubit state
\begin{align}\label{rhosupb}
\rho_{SU}:=\frac{1}{4}\left(\mathbf{I}_8-\sum_{i=1}^4\ket{S_i}\bra{S_i}\right).
\end{align}
From the property of UPB it follows that $\rho_{SU}\not\in\mathcal{F}(2\otimes2\otimes2)$. However, as shown in \cite{Bennett99} the state is biseparable across all the three bipartite cuts and hence $\rho_{SU}\in\mathcal{B}^{int}(2\otimes2\otimes2)$. In other words, being separable across all possible bipartitions, the state $\rho_{SU}$ contains multipartite entanglement.  

The authors in \cite{Bastian11,Leonardo13} find simple criteria to certify the presence of genuine entanglement in a multipartite state. To this aim the authors in \cite{Leonardo13} have made the conjecture that for $3$-qubit system the set of PPT-mixture will be identical to the set of biseparable states, {\it i.e.} $\mathcal{P}^{ch}(2\otimes2\otimes2)=\mathcal{B}^{ch}(2\otimes2\otimes2)$. However, Ha and Kye disprove this conjecture by constructing  $3$-qubit genuinely entangled states which are PPT \cite{Ha16}. Motivated by the results of Refs. \cite{Leonardo13} and \cite{Ha16}, here we address a different question. We ask whether a state being PPT across all three bipartitions implies that it is biseparable across all the three bipartite cuts. The question can be reformulated as to whether the set $\mathcal{B}^{int}(ABC)$ is same as the set $\mathcal{P}^{int}(ABC)$ or it is a proper subset. In the next sections, we answer this question in negation. 
\begin{theorem}\label{theo1}
$\mathcal{B}^{int}(ABC)\subsetneq\mathcal{P}^{int}(ABC)$ for $\mathbb{C}_A^{d_1}\otimes \mathbb{C}_B^{d_2}\otimes \mathbb{C}_C^{d_3}$ with $\min\{d_1,d_2,d_3\}\ge2$.
\end{theorem}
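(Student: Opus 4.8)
The plan is to establish the (trivially true) inclusion $\mathcal{B}^{int}(ABC)\subseteq\mathcal{P}^{int}(ABC)$ and then to exhibit, for every admissible triple of dimensions, an explicit state lying in $\mathcal{P}^{int}(ABC)$ but not in $\mathcal{B}^{int}(ABC)$. The inclusion itself is immediate: any $\rho$ biseparable across a cut is separable across that cut, hence PPT across it by Eq.~(\ref{eq1}) applied to the relevant bipartition; intersecting over the three cuts gives $\mathcal{B}^{int}\subseteq\mathcal{P}^{int}$. So the entire content is the strictness, i.e.\ the construction of a witness state.

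I would split the construction into two regimes. \emph{Regime 1: $(\mathbb{C}^d)^{\otimes 3}$ with $d\ge 3$ (and more generally $\min\{d_1,d_2,d_3\}\ge 3$).} Here I would use a multipartite unextendible product basis $\{\ket{\psi_i}\}_{i=1}^k$ in $\mathbb{C}^{d_1}\otimes\mathbb{C}^{d_2}\otimes\mathbb{C}^{d_3}$ chosen so that it is \emph{simultaneously} a UPB across each of the three bipartite cuts $A|BC$, $B|CA$, $C|AB$ — that is, the orthogonal complement of $\Span\{\ket{\psi_i}\}$ contains no product vector with respect to \emph{any} of the three cuts. Then set $\rho := \frac{1}{D-k}\bigl(\mathbf{I}_D - \sum_{i=1}^k \ket{\psi_i}\bra{\psi_i}\bigr)$ with $D=d_1d_2d_3$. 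Viewed across any single cut this is exactly the Bennett--DiVincenzo--\ldots\ BE-state construction from a bipartite UPB, so $\rho$ is PPT across that cut; intersecting, $\rho\in\mathcal{P}^{int}(ABC)$. On the other hand, its range contains no vector that is product across, say, the $A|BC$ cut, so $\rho$ cannot be even biseparable across $A|BC$ (a biseparable state's range would have to be spanned by product vectors across that cut), hence $\rho\notin\mathcal{B}(A|BC)\supseteq\mathcal{B}^{int}(ABC)$. The work here is to produce such a ``tri-cut'' UPB; I would either adapt the GenTiles/Shifts-type constructions, or take a sufficiently generic small UPB and verify the no-product-vector condition across all three cuts by the standard dimension-counting / algebraic-geometry argument (a generic low-dimensional subspace contains no product vectors when the codimension is large enough).

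\emph{Regime 2: at least one dimension equals $2$, in particular the $3$-qubit case.} Here the UPB route is blocked, since $\mathbb{C}^2\otimes\mathbb{C}^n$ admits no UPB \cite{Bennett99(1),DiVincenzo03}, so the range argument above is unavailable and a state that is NPT-free across the qubit cut is automatically separable across it. I would instead construct a state that is PPT across all three cuts but fails to be biseparable across one (or more) of them for a different structural reason. The natural candidate is a small perturbation / mixture built to be PPT across each cut by design (e.g.\ a state invariant under suitable local operations, or an edge-type state engineered from a carefully chosen family of product and entangled vectors) whose non-biseparability across the chosen cut is then certified by an explicit entanglement witness or by a rank/range obstruction that survives in the $\mathbb{C}^2\otimes\mathbb{C}^n$ setting. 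Concretely, I would look for $\rho$ supported on a subspace whose intersection with the set of $A|BC$-product vectors is too small to span it even though $\rho^{\mathrm{T}_A}, \rho^{\mathrm{T}_B}, \rho^{\mathrm{T}_C}\ge 0$; since non-distillable entangled states do exist for $\mathbb{C}^2\otimes\mathbb{C}^n$ once $n\ge 4$, and an $A|BC$ cut of a $3$-qubit system is $\mathbb{C}^2\otimes\mathbb{C}^4$, there is room for a PPT-across-$A|BC$ state that is nonetheless entangled across that cut once one simultaneously arranges PPT across the other two cuts. Finally, once the $3$-qubit witness is in hand, I would extend it to arbitrary $\mathbb{C}^{d_1}\otimes\mathbb{C}^{d_2}\otimes\mathbb{C}^{d_3}$ with $\min\{d_i\}=2$ by embedding and padding (tensoring/mixing with a fully product state on the extra dimensions), checking that PPT across all cuts and non-biseparability are preserved under this embedding.

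The main obstacle I expect is Regime 2: for $(\mathbb{C}^2)^{\otimes 3}$ one cannot reuse the elegant UPB mechanism, so both the construction of the state \emph{and} the proof that it is not biseparable across some cut must be done by hand — the construction must thread the needle of being PPT across all three cuts while being genuinely entangled across one of them, and certifying the latter requires an explicit witness or a nontrivial range/rank argument rather than a one-line UPB property. Verifying simultaneous PPT-ness across all three cuts for whatever ansatz one writes down, and making the non-biseparability certificate robust enough to survive the embedding into higher dimensions, is where the real effort lies.
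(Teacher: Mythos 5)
Your overall architecture (the trivial inclusion $\mathcal{B}^{int}\subseteq\mathcal{P}^{int}$, explicit witness states split into the $d\ge3$ and the qubit regimes, then padding/embedding for general $d_1,d_2,d_3$ as in Remark \ref{remark1}) matches the paper, but both of your witness constructions have genuine gaps. In Regime 1 you demand a UPB whose orthogonal complement contains no product vector across \emph{any} of the three cuts. That is a strictly stronger object than what is needed and than what the paper uses: the UPB $\mathcal{U}^{[3]}_{PB}$ of \cite{Agrawal19} employed in Proposition \ref{prop1} \emph{does} admit biseparable vectors in its complement (four mutually orthogonal ones across $A|BC$), and the inseparability argument is a counting/deficit form of the range criterion — the $A|BC$-separable vectors available inside the rank-$8$ range are too few to span it — not an absence of product vectors altogether. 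Your proposed justifications for the stronger object do not stand as written: a UPB complement is a highly non-generic subspace (the orthocomplement of an orthogonal set of product vectors), so the ``generic subspace of large codimension contains no product vectors'' argument cannot be invoked for it, and adapting Shifts-type constructions points the wrong way, since the Shifts complement is spanned by vectors separable across every cut. So Regime 1 currently rests on an unproved existence claim; it can be repaired either by actually exhibiting a UPB with genuinely entangled complement, or more economically by weakening your requirement to the paper's: it suffices that the cut-separable vectors orthogonal to the UPB fail to span the complement. (The PPT half of your argument is fine: orthogonality of a fully product set survives partial conjugation, so the normalized complement projector is PPT across every cut.)

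Regime 2 is where the theorem's real work lies, and there you provide no construction at all: you restate the target (PPT across all three cuts, entangled across one, certified by a witness or a range obstruction) and correctly note that the $A|BC$ cut of three qubits is $\mathbb{C}^2\otimes\mathbb{C}^4$, where PPT entanglement exists — but producing a concrete state and verifying simultaneous PPT-ness across all three cuts together with inseparability is precisely the content of Proposition \ref{prop2}. The paper does this by reinterpreting Horodecki's $2\otimes4$ PPT entangled state \cite{Horodecki97} as a three-qubit state, symmetrizing over the parties, deforming the mixture to a lower-rank operator $\rho^{[2]}_{ABC}(b)$ so that the range criterion has traction, checking PPT across all cuts for $b\in\mathfrak{R}$, and then showing by an explicit range-criterion computation that it is inseparable across $AB|C$. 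One sentence of yours is also simply wrong: ``a state that is NPT-free across the qubit cut is automatically separable across it'' fails for the $2\otimes4$ cut (PPT implies separability only for total dimension $\le 6$), and indeed contradicts your own later, correct observation. Without an explicit three-qubit state and its verification, Theorem \ref{theo1} remains unproved for $\min\{d_1,d_2,d_3\}=2$, which you yourself identify as the main obstacle.
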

A state being separable in some bipartition must be PPT across that bipartition. It, therefore, follows that a state belonging to the set $\mathcal{B}^{int}(ABC)$ must also belong to the set $\mathcal{P}^{int}(ABC)$. To prove the strict inclusion relation we provide explicit construction of states $\rho_{ABC}$ that belong to $\mathcal{P}^{int}(ABC)$ but do not belong to $\mathcal{B}^{int}(ABC)$. Will will first discuss the construction for $d\otimes d\otimes d$ with $d\ge3$ and then discuss the construction in $2\otimes2\otimes2$. 
\subsection{Construction in $\mathbb{C}^d\otimes\mathbb{C}^d\otimes\mathbb{C}^d$ for $d\ge3$}
Consider the case $d=3$. The construction follows from a recently proposed UPB in $(\mathbf{C}^3)^{\otimes3}$ \cite{Agrawal19}. We will work with the computational basis $\left\{\ket{p,q,r}~|p,q,r=0,1,2\right\}$ for the Hilbert space $(\mathbf{C}^3)^{\otimes3}$. Consider the following {\it twisted} orthogonal product basis (t-OPB), 
\begin{subequations}\label{tCOPB}
\begin{align}
\mathbb{B}_0:&=\{\ket{\psi}_{kkk}\equiv\ket{k,k,k}~|~k\in\{0,1,2\}\},\\
\mathbb{B}_1:&=\{\ket{\psi(i,j)}_{1} \equiv \ket{0,\eta_i,\xi_j}\},\\
\mathbb{B}_2:&=\{\ket{\psi(i,j)}_{2} \equiv \ket{\eta_i,2,\xi_j}\},\\
\mathbb{B}_3:&=\{\ket{\psi(i,j)}_{3} \equiv \ket{2,\xi_j,\eta_i}\},\\
\mathbb{B}_4:&=\{\ket{\psi(i,j)}_{4} \equiv \ket{\eta_i,\xi_j,0}\},\\
\mathbb{B}_5:&=\{\ket{\psi(i,j)}_{5} \equiv \ket{\xi_j,0,\eta_i}\},\\
\mathbb{B}_6:&=\{\ket{\psi(i,j)}_{6} \equiv \ket{\xi_j,\eta_i,2}\},
\end{align}
\end{subequations}
where $i,j\in\{0,1\}$, and $\ket{\eta_i}:=\ket{0}+(-1)^i\ket{1}$, $\ket{\xi_j}:=\ket{1}+(-1)^j\ket{2}$. Consider the state $\ket{S}:=(\ket{0}+\ket{1}+\ket{2})^{\otimes 3}$. Note that $\ket{S}$ is not orthogonal to the states in $\mathbb{B}_0$ and also not orthogonal to the states $\{\ket{\psi(0,0)}_l\}_{l=1}^6$. With the remaining states it is orthogonal and accordingly the set of states 
\begin{align}\label{UPB3}
\mathcal{U}^{[3]}_{PB}:=\left\{\bigcup_{l=1}^6\left\{\mathcal{B}_l\setminus\ket{\psi(0,0)}_l\right\}\bigcup\ket{S}\right\} 
\end{align}
forms a UPB in $(\mathbf{C}^3)^{\otimes3}$ \cite{Agrawal19}. Cardinality of $\mathcal{U}^{[3]}_{PB}$ is $19$ and the $8$-dimensional subspace orthogonal to this UPB is a fully entangled subspace. The normalized projector on this fully entangled subspace is a rank-$8$ density operator $\rho^{[3]}(8)\in\mathcal{D}\left((\mathbf{C}^3)^{\otimes3}\right)$ given by
\begin{align}
\rho^{[3]}(8):=\frac{1}{8}\left(\mathbb{I}_{27}-\sum_{\ket{\psi}\in\mathcal{U}^{[3]}_{PB}}\ket{\tilde{\psi}}\bra{\tilde{\psi}}\right).
\end{align}
Here, $\ket{\tilde{x}}$ denotes the normalized states proportional to the unnormalized ray vector $\ket{x}$.
\begin{proposition}\label{prop1}
The state $\rho^{[3]}(8)$ is PPT in all three bipartitions but not separable in any of these bipartite cuts, {\it i.e.} $\rho^{[3]}(8)\in\mathcal{P}^{int}(3\otimes3\otimes3)$ but $\rho^{[3]}(8)\not\in\mathcal{B}^{int}(3\otimes3\otimes3)$.
\end{proposition}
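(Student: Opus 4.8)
The plan is to prove the two assertions separately, as they rest on different features of the construction. For the PPT part, first note that the nineteen normalized rays of $\mathcal{U}^{[3]}_{PB}$ are mutually orthonormal product vectors, so $\sum_{i=1}^{19}\ket{\tilde\psi_i}\bra{\tilde\psi_i}$ is a rank-$19$ projector and $\rho^{[3]}(8)=\frac18\bigl(\mathbb{I}_{27}-\sum_i\ket{\tilde\psi_i}\bra{\tilde\psi_i}\bigr)$ equals $\frac18$ times the complementary rank-$8$ projector. Fixing the cut $A|BC$ and applying the partial transpose $\mathrm{T}_A$: because every $\ket{\tilde\psi_i}$ is \emph{fully} product, each $(\ket{\tilde\psi_i}\bra{\tilde\psi_i})^{\mathrm{T}_A}$ is again a rank-one projector onto a product vector --- the one obtained by complex-conjugating the $A$-component. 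The decisive observation, which is the standard mechanism behind UPB-induced PPT-entangled states \cite{Bennett99}, is that these transposed projectors stay mutually orthonormal: $\langle\tilde\psi_i|\tilde\psi_j\rangle$ factorizes over the three tensor legs, so its vanishing forces some single-leg overlap to vanish, a property unaffected by conjugating the $A$-leg. Hence $\sum_i(\ket{\tilde\psi_i}\bra{\tilde\psi_i})^{\mathrm{T}_A}$ is a rank-$19$ projector and $\bigl[\rho^{[3]}(8)\bigr]^{\mathrm{T}_A}\ge 0$. Repeating this verbatim with $\mathrm{T}_B$ and $\mathrm{T}_C$ yields $\rho^{[3]}(8)\in\mathcal{P}(A|BC)\cap\mathcal{P}(B|CA)\cap\mathcal{P}(C|AB)=\mathcal{P}^{int}(3\otimes3\otimes3)$.

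For the non-biseparability, the crucial input is that $\mathcal{S}:=\Supp\bigl(\rho^{[3]}(8)\bigr)$ --- the $8$-dimensional orthocomplement of $\mathcal{U}^{[3]}_{PB}$ --- is a \emph{genuinely} entangled subspace, i.e.\ it contains no vector of the form $\ket{\psi}_X\otimes\ket{\phi}_{YZ}$ for any of the three cuts $X|YZ$; this is precisely the property of this particular construction established in \cite{Agrawal19}. Granting it, suppose for contradiction that $\rho^{[3]}(8)$ is separable across some cut, say $\rho^{[3]}(8)=\sum_k p_k\,\ket{\mu_k}_A\bra{\mu_k}\otimes\ket{\nu_k}_{BC}\bra{\nu_k}$ with all $p_k>0$. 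Since the support of each summand in a convex decomposition of a density operator lies inside the support of that operator, every product vector $\ket{\mu_k}_A\otimes\ket{\nu_k}_{BC}$ would belong to $\mathcal{S}$, contradicting that $\mathcal{S}$ is genuinely entangled. Thus $\rho^{[3]}(8)\notin\mathcal{B}(A|BC)$, and the same argument gives $\rho^{[3]}(8)\notin\mathcal{B}(B|CA)$ and $\rho^{[3]}(8)\notin\mathcal{B}(C|AB)$; in particular $\rho^{[3]}(8)\notin\mathcal{B}^{int}(3\otimes3\otimes3)$, which completes the proof.

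The one nontrivial ingredient is the genuine entanglement of $\mathcal{S}$ invoked above, and I expect this to be the main obstacle should one want a self-contained argument in place of citing \cite{Agrawal19}. It can be recovered by observing that $\mathcal{S}$ contains a vector product across $X|YZ$ if and only if for some unit $\ket{\psi}_X$ the family $\{(\bra{\psi}_X\otimes\mathbb{I}_{YZ})\ket{u}:\ket{u}\in\mathcal{U}^{[3]}_{PB}\}$ fails to span $\mathbb{C}^3_Y\otimes\mathbb{C}^3_Z$; ruling this out for every $\ket{\psi}_X$ and each of the three cuts is a finite but somewhat tedious computation that exploits the ``twisted'' overlaps $\langle\eta_i|\eta_{i'}\rangle$ and $\langle\xi_j|\xi_{j'}\rangle$ built into the t-OPB of \eqref{tCOPB}. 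By comparison, the PPT half is routine and needs no case analysis.
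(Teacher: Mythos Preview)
Your PPT argument is correct and parallels the paper's; the paper simply notes that every vector in $\mathcal{U}^{[3]}_{PB}$ has real coefficients, so each partial transpose leaves $\rho^{[3]}(8)$ literally unchanged, while you give the general orthogonality-preservation argument valid for any UPB.

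The non-biseparability half, however, rests on a false premise. You claim that the $8$-dimensional support $\mathcal{S}$ is a \emph{genuinely} entangled subspace --- containing no vector product across any bipartite cut --- and attribute this to \cite{Agrawal19}. That is not what \cite{Agrawal19} establishes, and the claim is in fact false: $\mathcal{S}$ \emph{does} contain $A|BC$-product vectors. Concretely,
\[
\ket{\psi(0,0)}_2 - \ket{\psi(0,0)}_4 \;=\; \ket{\eta_0}_A \otimes \bigl(\ket{2,\xi_0} - \ket{\xi_0,0}\bigr)_{BC}
\]
lies in $\mathcal{S}$ (it is orthogonal to every t-OPB member retained in $\mathcal{U}^{[3]}_{PB}$, and the two overlaps with $\ket{S}$ are equal and cancel) and is manifestly product across $A|BC$. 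The orthocomplement of a tripartite UPB is only guaranteed to contain no \emph{fully} product vector $\ket{a}_A\ket{b}_B\ket{c}_C$; this is strictly weaker than genuine entanglement and does not by itself obstruct a biseparable decomposition. The verification you sketch in your last paragraph would therefore fail: taking $\ket{\psi}_A=\ket{\eta_0}$ one finds that the projected family does \emph{not} span $\mathbb{C}^3_B\otimes\mathbb{C}^3_C$.

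The paper's argument instead uses a finer fact from \cite{Agrawal19}: the $A|BC$-biseparable vectors lying in $\mathcal{S}$ are confined to a $4$-dimensional subspace (equivalently, the nineteen UPB vectors together with four explicit biseparable vectors form an \emph{unextendible biseparable basis}). Any $A|BC$-separable density operator supported in $\mathcal{S}$ therefore has rank at most $4$, whereas $\rho^{[3]}(8)$ has rank $8$. This rank deficit --- not the nonexistence of biseparable vectors --- is what forces $\rho^{[3]}(8)\notin\mathcal{B}(A|BC)$, and the symmetry of the construction extends this to the other two cuts.
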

\begin{proof}
Partial transposition (PT) acts linearly and on a product state $\rho_{AB}=\ket{\phi}_A\bra{\phi}\otimes\ket{\chi}_B\bra{\chi}$ it acts as $\rho^{\mathrm{T}_B}_{AB}:=(\mathbf{I}\otimes T)[\ket{\phi}_A\bra{\phi}\otimes\ket{\chi}_B\bra{\chi}]:=\ket{\phi}_A\bra{\phi}\otimes\ket{\chi^\star}_B\bra{\chi^\star}$ \cite{DiVincenzo03}; where $\ket{\chi^\star}=\sum\alpha_i^\star\ket{i}$ for $\ket{\chi}=\sum\alpha_i\ket{i}$ with $\{\ket{i}\}$ being an orthonormal basis (ONB). Therefore we have, $\rho^{\mathrm{T}_x}_3(8)\ge0$ for $x\in\{A,B,C\}$. In fact, since all the coefficients of the states in t-OPB $\{\mathbb{B}_l\}_{l=0}^6$ are real, therefore $\rho^{\mathrm{T}_x}_3(8)=\rho^{[3]}(8)$ for all $x$. This implies that  the state $\rho^{[3]}(8)$ is PPT across all bipartitions and consequently $\rho^{[3]}(8)\in\mathcal{P}^{int}(3\otimes3\otimes3)$. A state to belong in the set $\mathcal{B}(A|BC)$ must allow separable decomposition across this cut. But, as pointed out in \cite{Agrawal19}, one can have only four mutually orthogonal states that are separable across A|BC cut and orthogonal to the states in $\mathcal{U}^{[3]}_{PB}$\footnote{Using this construction, the authors in \cite{Agrawal19} have introduced the concept of unextendible biseparable basis (UBB). Construction of UBB is quite important as the subspace orthogonal to it turns out to be genuinely entangled subspace. This construction is also relevant to the study of genuine quantum nonlocality without entanglement (GQNWE) phenomenon \cite{Halder19(1),Rout19,Rout20}. GQNWE is a true multiparty generalization of the seminal quantum nonlocality without entanglement \cite{Bennett99(1)} phenomenon which has also been studied beyond quantum scenario very recently \cite{Bhattacharya20}.}. These four states are given by $\{\ket{\psi^-}_{24},\ket{\psi^-}_{56},\ket{\psi^-}_{(0)1},\ket{\psi^-}_{(2)3}\}$, where $\ket{\psi^-}_{lm}:=\ket{\psi(0,0)}_l-\ket{\psi(0,0)}_m$ and $\ket{\psi^-}_{(k)l}:=4\ket{k,k,k}-\ket{\psi(0,0)}_l$. Since the state $\rho^{[3]}(8)$ is of rank $8$ there is deficit of separable states across $A|BC$ cut and thus $\rho^{[3]}(8)$ does not allow a separable decomposition across this cut implying $\rho^{[3]}(8)\not\in\mathcal{B}(A|BC)$. From the symmetry of the construction it follows that $\rho^{[3]}(8)$ belongs to neither $\mathcal{B}(B|CA)$ nor $\mathcal{B}(C|AB)$ and consequently it follows that $\rho^{[3]}(8)\not\in\mathcal{B}^{int}(ABC)$. This completes the proof. 
\end{proof}
Proposition \ref{prop1} can be generalized for arbitrary higher dimensional Hilbert spaces $\left(\mathbb{C}^d\right)^{\otimes3}$ using the the UPBs constructed in \cite{Agrawal19}. For the general construction we refer the interested readers to \cite{Agrawal19}. Here we only recall the UPB of $\left(\mathbb{C}^4\right)^{\otimes3}$ since the construction for even dimension is bit difference than the odd dimensional case. The t-OPB in $\left(\mathbb{C}^4\right)^{\otimes3}$ is given by,
\begin{subequations}\label{tCOPB4}
\begin{align}
\mathbb{B}_0:&=\{\ket{\psi}_{kkk}\equiv\ket{k,k,k}~|~k\in\{0,3\}\},\\
\mathbb{B}^\prime_0:&=\{\ket{\psi(l,m,p)}\equiv\ket{\phi_l,\phi_m,\phi_p}\},\\
\mathbb{B}_1:&=\{\ket{\psi(i,j)}_{1} \equiv \ket{0,\eta_i,\xi_j}\},\\
\mathbb{B}_2:&=\{\ket{\psi(i,j)}_{2} \equiv \ket{\eta_i,3,\xi_j}\},\\
\mathbb{B}_3:&=\{\ket{\psi(i,j)}_{3} \equiv \ket{\xi_j,0,\eta_i}\},\\
\mathbb{B}_4:&=\{\ket{\psi(i,j)}_{4} \equiv \ket{\xi_j,\eta_i,3}\},\\
\mathbb{B}_5:&=\{\ket{\psi(i,j)}_{5} \equiv \ket{3,\xi_j,\eta_i}\},\\
\mathbb{B}_6:&=\{\ket{\psi(i,j)}_{6} \equiv \ket{\eta_i,\xi_j,0}\},
\end{align}
\end{subequations}
where $l,m,p\in\{0,1\},~\ket{\phi_0}:=\ket{1}+\ket{2},~\ket{\phi_1}:=\ket{1}-\ket{2};~i,j\in\{0,1,2\},~\ket{\eta_0}=\ket{0}+\ket{1}+\ket{2}$ and $\ket{\eta_1}~\&~\ket{\eta_2}$ are linear combination of $\{\ket{0},\ket{1},\ket{2}\}$ such that $\{\ket{\eta_i}\}_{i=0}^2$ are mutually orthogonal; $\ket{\xi_0}=\ket{1}+\ket{2}+\ket{3}$ and $\ket{\xi_1}~\&~\ket{\xi_2}$ are linear combination of $\{\ket{1},\ket{2},\ket{3}\}$ such that $\{\ket{\xi_i}\}_{i=0}^2$ are mutually orthogonal. Considering $\ket{S}:=(\ket{0}+\ket{1}+\ket{2}+\ket{3})^{\otimes3}$ the UPB is given by
\footnotesize
\begin{align}\label{UPB4}
\mathcal{U}^{[4]}_{PB}:=\left\{\bigcup_{l=1}^6\left\{\mathcal{B}_l\setminus\ket{\psi(0,0)}_l\right\}\bigcup\left\{\mathcal{B}^\prime_0\setminus\ket{\psi(0,0,0)}\right\}\bigcup\ket{S}\right\}. 
\end{align}
\normalsize
The rank-$8$ state $\rho^{[4]}(8)$ belonging to $\mathcal{P}^{int}(4\otimes4\otimes4)$ but not belonging to $\mathcal{B}^{int}(4\otimes4\otimes4)$ is given by
\begin{align}
\rho^{[4]}(8):=\frac{1}{8}\left(\mathbb{I}_{64}-\sum_{\ket{\psi}\in\mathcal{U}^{[4]}_{PB}}\ket{\tilde{\psi}}\bra{\tilde{\psi}}\right).
\end{align}
In the next subsection, we will discuss the construction in $3$-qubit Hilbert space. 

\subsection{Construction in $\mathbb{C}^2\otimes\mathbb{C}^2\otimes\mathbb{C}^2$}
We would first like to point out a fundamental difference between the $3$-qubit unextendible product basis $\mathcal{U}_{PB}^{{\bf Shifts}}$ and the UPBs $\mathcal{U}^{[3]}_{PB}$ \& $\mathcal{U}^{[4]}_{PB}$ (and their generalization). If biseparable states are allowed along with the states in $\mathcal{U}_{PB}^{{\bf Shifts}}$ then one can construct a complete orthogonal basis for $(\mathbb{C}^2)^{\otimes3}$. For instance, consider the two-qubit states $\ket{a}=\ket{1,+},~\ket{b}=\ket{+,0},~\ket{c}=\ket{0,1},~\ket{d}=\ket{-,-}$. Then the states 
\begin{align*}\label{A|BC}
\rotatebox[origin=c]{0}{$\$(A|BC)\equiv$}
\left\{\!\begin{aligned}
\ket{\kappa_1}&:=\ket{0}\ket{a^\perp},~~\ket{\kappa_2}:=\ket{1}\ket{b^\perp}\\
\ket{\kappa_3}&:=\ket{+}\ket{c^\perp},~~\ket{\kappa_4}:=\ket{-}\ket{d^\perp}
\end{aligned}\right\},	
\end{align*}
are separable across A|BC cut, where $\ket{a^\perp},\ket{b^\perp}\in\mbox{Span}\{\ket{a},\ket{b}\}$ and $\ket{c^\perp},\ket{d^\perp}\in\mbox{Span}\{\ket{c},\ket{d}\}$ with $\ket{x^\perp}$ denoting the state orthogonal to $\ket{x}$. The states in $\$(A|BC)$ along with the states in $\mathcal{U}_{PB}^{{\bf Shifts}}$ form a complete basis for $(\mathbb{C}^2)^{\otimes3}$. On the other hand, as discussed in the proof of Proposition \ref{prop1} the set $\mathcal{U}^{[3]}_{PB}$ (or its generalization $\mathcal{U}^{[d]}_{PB}$) is not completable by appending only biseparable states. This fact plays the crucial role to prove non biseparability of the the state $\rho^{[3]}(8)$ and its generalization. Such a fact, however, is not possible for $3$-qubit product basis as it is not possible to have a set of orthogonal product states in $\mathbb{C}^2\otimes\mathbb{C}^n$ which in uncompletable \cite{Bennett99,DiVincenzo03}. We therefore look for a different method to construct our required state in $3$-qubit Hilbert space. 

At this stage, we check the state of Ha and Kye \cite{Ha16} which is used to disprove the conjecture made by the authors in Ref.\cite{Leonardo13}. This particular state is PPT across A|BC and hence it belongs to $\mathcal{P}(A|BC)$. However, being NPT across other cuts it does not belong to $\mathcal{P}^{int}(2\otimes2\otimes2)$ and hence fails to fulfil our purpose. 

We then consider the PPT bound entangled state of $\mathbb{C}^2\otimes\mathbb{C}^4$ as constructed by P. Horodecki in \cite{Horodecki97}. This particular state can also be thought as a state of $\mathbb{C}^2\otimes\mathbb{C}^2\otimes\mathbb{C}^2$. Let us recall Horodecki's construction as of a $3$-qubit state. For that first consider the states,

\begin{subequations}

\begin{align}
\ket{\psi^1}_{ABC}=&\frac{1}{\sqrt{2}}(\ket{0}_A\otimes\ket{00}_{BC}+\ket{1}_A\otimes\ket{01}_{BC}),\\ 
\ket{\psi^2}_{ABC}=&\frac{1}{\sqrt{2}}(\ket{0}_A\otimes\ket{01}_{BC}+\ket{1}_A\otimes\ket{10}_{BC}),\\
\ket{\psi^3}_{ABC}=&\frac{1}{\sqrt{2}}(\ket{0}_A\otimes\ket{10}_{BC}+\ket{1}_A\otimes\ket{11}_{BC}),\\
\ket{\phi^{(b)}}_{ABC}&=\ket{1}_A\otimes\left(\sqrt{\frac{1+b}{2}}\ket{00}+\sqrt{\frac{1-b}{2}}\ket{11}\right)_{BC},
\end{align}
\end{subequations}
where $0\le b\le1$. Consider the density operator defined by,
\begin{align}
\chi_{ABC}:=\frac{2}{7}\sum_{i=1}^3\mathbb{P}\left[\psi^i_{ABC}\right]+\frac{1}{7}\mathbb{P}[011],  
\end{align}

where $\mathbb{P}[x]:=\ket{x}\bra{x}$. A straightforward calculation yields that the state $\chi_{ABC}$ is NPT across A|BC cut. Consider a new density operator,
\begin{align}\label{horo}
\sigma^{(b)}_{ABC}:=\frac{7b}{7b+1}\chi_{ABC}+\frac{1}{7b+1}\mathbb{P}\left[\phi^{(b)}_{ABC}\right].    
\end{align}
The state $\sigma^{(b)}_{ABC}$ turns out to be PPT across A|BC cut for the whole range of the parameter $b$. Here, the state $\mathbb{P}\left[\phi^{(b)}_{ABC}\right]$ can be thought as the noise part that absorbs the NPT-ness of $\chi_{ABC}$. Furthermore, applying the range criterion of entanglement (Theorem 2 of \cite{Horodecki97}) it turns out that for $0<b<1$ the state is entangled whereas it is separable for $b=0,1$. Therefore, for $b\in(0,1)$ the state $\sigma^{(b)}_{ABC}\in\mathcal{P}(A|BC)$ but $\sigma^{(b)}_{ABC}\not\in\mathcal{B}(A|BC)$. The matrix representation of $\sigma^{(b)}_{ABC}$ in computational basis reads as, 
\begin{align}
\sigma^{(b)}_{ABC}\equiv\frac{1}{7b+1} \begin{pmatrix}
b&0&0&0&0&b&0&0\\
0&b&0&0&0&0&b&0\\
0&0&b&0&0&0&0&b\\
0&0&0&b&0&0&0&0\\
0&0&0&0&\frac{1+b}{2}&0&0&\frac{\sqrt{1-b^2}}{2}\\
b&0&0&0&0&b&0&0\\
0&b&0&0&0&0&b&0\\
0&0&b&0&\frac{\sqrt{1-b^2}}{2}&0&0&\frac{1+b}{2}
\end{pmatrix}.
\end{align}
Here we follow the lexicographic order from left to right and from up to down. Explicit calculation further yields that $\sigma^{(b)}_{ABC}$ is NPT across the other two cuts for $b\in[0,1]$ and hence it does not belong to $\mathcal{P}^{int}(2\otimes2\otimes2)$. It is important to note that the construction of $\sigma^{(b)}_{ABC}$ is not party symmetric. So we consider a party symmetric state $\eta^{(b)}_{ABC}$ given by 
\begin{align}
\eta^{(b)}_{ABC}:=\frac{1}{3}\left(\sigma^{(b)}_{\overline{ABC}} + \sigma^{(b)}_{\overline{BCA}} + \sigma^{(b)}_{\overline{CAB}}\right).
\end{align}
We use the symbol like $\sigma^{(b)}_{\overline{ABC}}$ to denote the fact that the ordering of party index does matter. For instance, the state $\sigma^{(b)}_{\overline{ABC}}$ is same as $\sigma^{(b)}_{ABC}$ of Eq.(\ref{horo}), whereas $\sigma^{(b)}_{\overline{BCA}}$ is same as the state in Eq.(\ref{horo}) but with the role of the party indices changed as $A\to B,~ B\to C,~ C\to A$; the state $\sigma^{(b)}_{\overline{CAB}}$ is defined similarly. 
 
For certain range of the parameter $b$ the state $\eta^{(b)}_{ABC}$ turns out to be PPT across all the three bipartitions. But, we also find that the rank of this state as well as the rank of its partial transposition across different cuts turn out to be $8$. Therefore, the range criterion does not directly apply to establish inseparability of this state. To obtain a low rank density operator we thus consider the following operator, 
\begin{align}
h^{(b)}_{ABC}&:=\eta^{(b)}_{ABC}- \mu\left(v_1 v_1^{\mathrm{T}}+v_2 v_2^{\mathrm{T}}\right)+\nu\left(v_3 v_4^{\mathrm{T}}+v_4 v_3^{\mathrm{T}}\right)\nonumber\\
&~~~~~~~~~~~~~~~~~~+\epsilon\left(v_5 v_6^{\mathrm{T}}+v_6 v_5^{\mathrm{T}}\right),   
\end{align}
where, $\mu:= \frac{b}{3(1+7b)},~\nu~=\frac{1+3b}{6+42b},~\epsilon:=\frac{2b}{3(1+7b)}$; $\mathrm{T}$ denotes matrix transposition and
\begin{align*}
v_1&:=(0~1~0~0~0~0~1~0)^{\mathrm{T}},~~ 
v_2:=(0~0~1~0~0~1~0~0)^{\mathrm{T}},\\
v_3&:=(0~1~0~0~0~0~0~0)^{\mathrm{T}},~~
v_4:=(0~0~1~0~0~0~0~0)^{\mathrm{T}},\\
v_5&:=(0~0~0~0~0~0~1~0)^{\mathrm{T}},~~
v_6:=(0~0~0~0~0~1~0~0)^{\mathrm{T}}.
\end{align*}
Although the matrix $h^{(b)}_{ABC}$ is positive semi-definite, its trace is not one. A proper normalization yields us the density operator
\begin{align}
\rho^{[2]}_{ABC}(b):= \frac{3+21b}{3+17b}h^{(b)}_{ABC}.
\end{align}
Straightforward calculations lead us to the following observations regarding the state $\rho^{[2]}_{ABC}(b)$:
\begin{itemize}
\item[O-$1$]: $\rho^{[2]}_{ABC}(b)$ is a valid density operator for $b\in[0,1]$, {\it i.e.} for all values of the parameter $b$, $\rho^{[2]}_{ABC}(b)\in\mathcal{D}(2\otimes2\otimes2)$. 
\item[O-$2$]: Partial transposition of $\rho^{[2]}_{ABC}(b)$ with respect to A is positive semi-definite for parameter values $b\in[0,1]$, {\it i.e.} $\left[\rho^{[2]}_{ABC}(b)\right]^{\mathrm{T}_A}\ge0$ and consequently $\rho^{[2]}_{ABC}(b)\in\mathcal{P}(A|BC)$ for all $b\in[0,1]$.
\item[O-$3$]: Partial transpositions of $\rho^{[2]}_{ABC}(b)$ with respect to B and C are positive semi-definite for parameter values $b\in(\sim0.8184,1]:=\mathfrak{R}\subset[0,1]$, {\it i.e.} $\left[\rho^{[2]}_{ABC}(b)\right]^{\mathrm{T}_x}\ge0$ for $x\in\{B,C\}$ and consequently $\rho^{[2]}_{ABC}(b)\in\mathcal{P}(B|CA)$ and $\rho^{[2]}_{ABC}(b)\in\mathcal{P}(C|AB)$ for all $b\in\mathfrak{R}$.
\end{itemize}
We thus arrive at the following proposition.

\begin{proposition}\label{prop2}
For the parameter values $b\in\mathfrak{R}$ the state  $\rho^{[2]}_{ABC}(b)\in\mathcal{P}^{int}(2\otimes2\otimes2)$ but $\rho^{[2]}_{ABC}(b)\not\in\mathcal{B}^{int}(2\otimes2\otimes2)$ and hence $\mathcal{B}^{int}(2\otimes2\otimes2)\subsetneq\mathcal{P}^{int}(2\otimes2\otimes2)$.
\end{proposition}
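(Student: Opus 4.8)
The plan is to establish the two membership statements separately; the inclusion $\mathcal{B}^{int}(2\otimes2\otimes2)\subseteq\mathcal{P}^{int}(2\otimes2\otimes2)$ has already been recorded below Theorem~\ref{theo1}, so only the strictness needs a witness. Membership $\rho^{[2]}_{ABC}(b)\in\mathcal{P}^{int}(2\otimes2\otimes2)$ for $b\in\mathfrak{R}$ is immediate from the observations preceding the proposition: O-$1$ gives $\rho^{[2]}_{ABC}(b)\in\mathcal{D}(2\otimes2\otimes2)$, and O-$2$ together with O-$3$ gives positivity of all three partial transposes on $\mathfrak{R}$, i.e.\ $\rho^{[2]}_{ABC}(b)\in\mathcal{P}(A|BC)\cap\mathcal{P}(B|CA)\cap\mathcal{P}(C|AB)$. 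Hence the entire content of the proposition is to show $\rho^{[2]}_{ABC}(b)\notin\mathcal{B}^{int}(2\otimes2\otimes2)$, and since $\mathcal{B}^{int}(2\otimes2\otimes2)\subseteq\mathcal{B}(A|BC)$, it suffices to prove that $\rho^{[2]}_{ABC}(b)$ is entangled when viewed as a bipartite state on $\mathbb{C}^2_A\otimes\mathbb{C}^4_{BC}$.

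For this I would use the range criterion of entanglement (Theorem~2 of \cite{Horodecki97}), exactly as was done for $\sigma^{(b)}_{ABC}$: if $\rho^{[2]}_{ABC}(b)\in\mathcal{B}(A|BC)$ then there is a finite family of product vectors $\{\ket{\alpha_k}_A\otimes\ket{\beta_k}_{BC}\}$ spanning $\mathrm{Range}\,\rho^{[2]}_{ABC}(b)$ whose partial conjugates $\{\ket{\alpha_k}_A\otimes\ket{\beta_k^\star}_{BC}\}$ span $\mathrm{Range}\,[\rho^{[2]}_{ABC}(b)]^{\mathrm{T}_A}$. The first step is to pin down these two subspaces: the operator $h^{(b)}_{ABC}$ was engineered precisely so that subtracting the rank-one pieces $v_1v_1^{\mathrm{T}},v_2v_2^{\mathrm{T}}$ and adding the off-diagonal pieces $v_3v_4^{\mathrm{T}}+v_4v_3^{\mathrm{T}}$, $v_5v_6^{\mathrm{T}}+v_6v_5^{\mathrm{T}}$ — all supported on the block $\mathrm{Span}\{\ket{001},\ket{010},\ket{101},\ket{110}\}$ — makes $\rho^{[2]}_{ABC}(b)$ and its partial transpose rank-deficient; I would compute these ranks and explicit (rational-in-$b$) bases for $\mathrm{Range}\,\rho^{[2]}_{ABC}(b)$ and $\mathrm{Range}\,[\rho^{[2]}_{ABC}(b)]^{\mathrm{T}_A}$. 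The second step is to enumerate the product vectors in $\mathrm{Range}\,\rho^{[2]}_{ABC}(b)$: writing a generic qubit state as $\ket{\alpha}=\ket{0}+\lambda\ket{1}$, the requirement $\ket{\alpha}\otimes\ket{\beta}\perp\mathrm{Ker}\,\rho^{[2]}_{ABC}(b)$ is a homogeneous linear system for $\ket{\beta}\in\mathbb{C}^4$ whose solvability forces a polynomial equation in $\lambda$; its roots determine finitely many admissible directions $\lambda_k$ and associated $\ket{\beta_k}$. The third step is to derive a contradiction with the range criterion — either because these product vectors already fail to span $\mathrm{Range}\,\rho^{[2]}_{ABC}(b)$, or, paralleling Horodecki's original computation, because the partially conjugated vectors $\ket{\alpha_k}\otimes\ket{\beta_k^\star}$ do not all lie in $\mathrm{Range}\,[\rho^{[2]}_{ABC}(b)]^{\mathrm{T}_A}$. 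Either way $\rho^{[2]}_{ABC}(b)\notin\mathcal{B}(A|BC)$, hence $\notin\mathcal{B}^{int}(2\otimes2\otimes2)$, and with the first paragraph this gives $\mathcal{B}^{int}(2\otimes2\otimes2)\subsetneq\mathcal{P}^{int}(2\otimes2\otimes2)$.

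The main obstacle is carrying the range-criterion step through with the parameter $b$ kept live: one must solve the product-vector polynomial and verify the failure of the conjugated family to span for \emph{every} $b\in\mathfrak{R}$, not merely at one convenient value. I see two ways to manage this. One is to keep $b$ symbolic throughout and check that the obstruction persists identically on the interval. The other is to establish the obstruction at a single interior point $b_0\in\mathfrak{R}$ and then extend it by a semicontinuity argument — ranks are lower semicontinuous, and the coefficients of the product-vector polynomial depend continuously (indeed rationally) on $b$ — which would secure the conclusion on an open subinterval, shrinking $\mathfrak{R}$ slightly if necessary. A secondary point deserving a line of checking is the endpoint $b=1$, where the underlying Horodecki $2\otimes4$ state degenerates to a separable one; one should confirm whether $\rho^{[2]}_{ABC}(1)$ is still inseparable across $A|BC$ or whether the honest range is the open interval, and phrase the proposition accordingly.
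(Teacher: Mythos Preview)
Your overall strategy—use O-$1$–O-$3$ for the PPT part, then apply the range criterion of \cite{Horodecki97} to exhibit bipartite entanglement across one cut—is the same as the paper's. But two differences are worth flagging.

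First, the paper works across the $AB|C$ cut, not $A|BC$. This choice is not arbitrary: a direct inspection of the matrix of $\rho^{[2]}_{ABC}(b)$ shows that every vector in its range has the form $(A,B,B,C;D,E,E,F)^{\mathrm T}$, i.e.\ the second and third coordinates coincide, and so do the sixth and seventh. This symmetry singles out $C$ as the natural qubit to split off: writing a product vector as $\psi\otimes\phi$ with $\psi\in\mathbb C^4_{AB}$ and $\phi\in\mathbb C^2_C$, the range constraint immediately forces $\phi$ into one of the three cases $(1,0)$, $(0,1)$, $(1,t)$, with $\psi$ then pinned down to two free parameters in each case. No polynomial root-finding is required, and nothing depends delicately on $b$. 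The obstruction is then a single explicit vector $u$ in the range of $[\rho^{[2]}_{ABC}(b)]^{\mathrm T_C}$ that visibly cannot be reached by the partially conjugated family. Your $A|BC$ choice does not align with this built-in symmetry, which is exactly why you are pushed into the generic machinery (parametrize the qubit by $\lambda$, solve a polynomial, track roots as $b$ varies).

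Second, precisely because the paper's range description and obstruction vector are uniform in $b$, the argument covers every $b\in\mathfrak R$ at once; the difficulty you anticipate of ``keeping $b$ live'' and the semicontinuity workaround you sketch simply do not arise. So your plan is not wrong in principle, but it is working against the grain of the construction: the modification $h^{(b)}_{ABC}$ was engineered to produce a rank deficiency that is cleanly visible in the $AB|C$ splitting, and the short proof exploits exactly that. If you insist on the $A|BC$ cut you would also need to verify independently that the state is actually entangled there—the paper establishes only $\rho^{[2]}_{ABC}(b)\notin\mathcal B(C|AB)$, and the $B\leftrightarrow C$ symmetry of the range does not by itself transfer that conclusion to the $A|BC$ cut.
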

\begin{proof}
Proof of the first part follows immediately from observations O-$2$ and O-$3$. Since $\rho^{[2]}_{ABC}(b)\in\mathcal{P}(A|BC)$ for $b\in[0,1]$ and $\rho^{[2]}_{ABC}(b)\in\mathcal{P}(B|CA),\mathcal{P}(C|AB)$ for $b\in\mathfrak{R}$, therefore $\rho^{[2]}_{ABC}(b)\in\mathcal{P}^{int}(2\otimes2\otimes2)$ for $b\in\mathfrak{R}$.

We will now prove that the state $\rho^{[2]}_{ABC}(b)$ is not separable across AB|C cut. For that we first write down the matrices of the density operator $\rho^{[2]}_{ABC}(b)$ and its partial transposition with respect to C, 
\begin{align}
\rho^{[2]}_{ABC}(b) \equiv\Theta \begin{pmatrix}
\Gamma & 0 & 0 &\frac{\Gamma }{3}& 0&\frac{\Gamma }{3}&\frac{\Gamma }{3}&0 \\
0&\Lambda&\Lambda&0&0&0&0&\Omega\\
0&\Lambda&\Lambda&0&0&0&0&\Omega\\
\frac{\Gamma }{3}&0&0&\Gamma&\frac{\Gamma }{3}&0&0&0\\
0&0&0&\frac{\Gamma }{3}&\Delta&0&0&\Omega\\
\frac{\Gamma }{3}&0&0&0&0&\frac{2\Gamma }{3}&\frac{2 \Gamma }{3}&0\\
\frac{\Gamma }{3}&0&0&0&0&\frac{2\Gamma }{3}&\frac{2\Gamma }{3}&0\\
0&\Omega&\Omega&0&\Omega&0&0&\zeta
\end{pmatrix};
\end{align}
\begin{align}
\left[\rho^{[2]}_{ABC}(b)\right]^{\mathrm{T}_c}\equiv\Theta \begin{pmatrix}
\Gamma & 0 & 0 &\Lambda& 0&0&\frac{\Gamma }{3}&0 \\
0&\Lambda&\frac{\Gamma }{3}&0&\frac{\Gamma }{3}&0&0&\Omega\\
0&\frac{\Gamma }{3}&\Lambda&0&0&\frac{\Gamma }{3}&0&0\\
\Lambda&0&0&\Gamma&0&0&\Omega&0\\
0&\frac{\Gamma }{3}&0&0&\Delta&0&0&\frac{2\Gamma }{3}\\
0&0&\frac{\Gamma }{3}&0&0&\frac{2\Gamma }{3}&\Omega&0\\
\frac{\Gamma }{3}&0&0&\Omega&0&\Omega&\frac{2\Gamma }{3}&0\\
0&\Omega&0&0&\frac{2 \Gamma }{3}&0&0&\zeta
\end{pmatrix};
\end{align}
where, $\Gamma:=\frac{b}{1+7b},~\Lambda:=\frac{1+3b}{6(1+7b)},~\Delta:=\frac{1+5b}{6(1+7b)},~\zeta:=\frac{1+b}{2(1+7b)},~\Omega:=\frac{2b+\sqrt{1-b^2}}{6(1+7b)}~\&~\Theta:=\frac{3+21b}{3+17b}$. A vector $\omega$ lying in the range of $\rho^{[2]}_{ABC}(b)$ can be expressed as,
\begin{align}
\label{range general}
\omega&=(A,B,B,C;D,E,E,F)^{\mathrm{T}},\\
&\mbox{where~}A,B,C,D,E,F\in \mathbb{C}.\nonumber
\end{align}
\begin{figure}[t!]
\centering
\includegraphics[width=8.5cm]{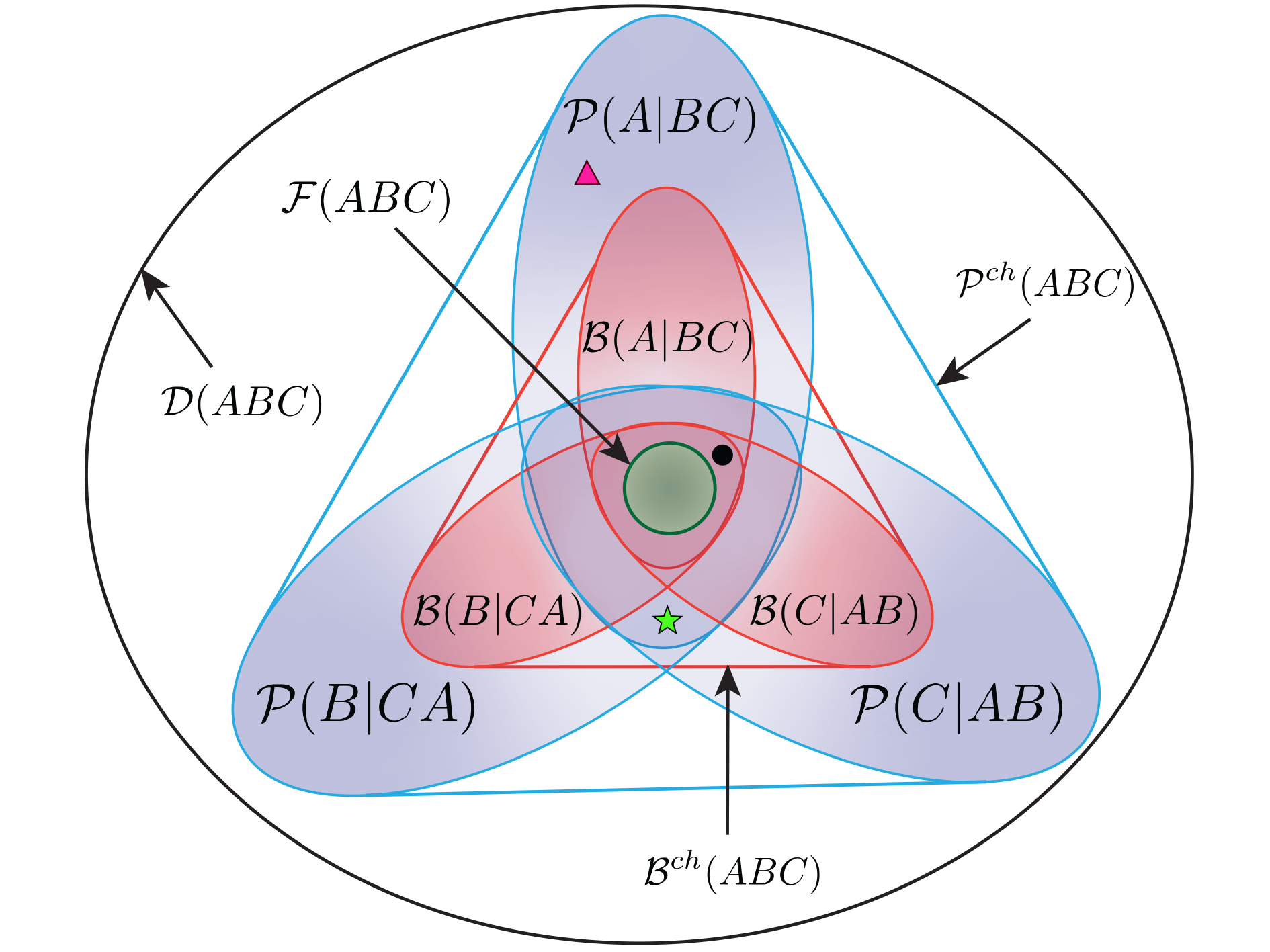}
\caption{(Color online) A set inclusion diagram among the different convex sets of states for tripartite quantum systems. For instance, $\mathcal{B}(A|BC)$ denotes the set of states that are separable across A|BC cut, whereas $\mathcal{P}(A|BC)$ stands for the set of PPT states across the same cut. The set of fully separable states $\mathcal{F}(ABC)$ (green region) is a strict subset of $\mathcal{B}(A|BC)\cap\mathcal{B}(B|CA)\cap\mathcal{B}(C|AB)$ even for the $3$-qubit Hilbert space. The point \textcolor[rgb]{0,0,0}{$\bullet$} representing the state $\rho_{SU}$ of Eq.(\ref{rhosupb}) (first identified in Bennett {\it et al}'s work \cite{Bennett99}) establishes this strict inclusion relation. On the other hand, even for the simplest tripartite system $\mathcal{B}^{ch}(2\otimes2\otimes2) \subsetneq \mathcal{P}^{ch}(2\otimes2\otimes2)$. The point 
\textcolor[rgb]{1,0,.5}{$\blacktriangle$} representing the state identified by Ha \& Kay in \cite{Ha16} establishes this particular strict inclusion relation. For any tripartite Hilbert space $\mathbb{C}^{d_1}_A\otimes\mathbb{C}^{d_2}_B\otimes\mathbb{C}^{d_3}_C$ with $\min\{d_1,d_2,d_3\}\ge2$, $\mathcal{B}(A|BC)\cap\mathcal{B}(B|CA)\cap\mathcal{B}(C|AB)\subsetneq\mathcal{P}(A|BC)\cap\mathcal{P}(B|CA)\cap\mathcal{P}(C|AB)$. The point \textcolor[rgb]{0,.8,0}{$\bigstar$} representing the states described in Propositions \ref{prop1} \& \ref{prop2} and in Remark \ref{remark1} establishes this fact.}
\label{figure1}
\end{figure}
If $\rho^{[2]}_{ABC}(b)$ is separable across AB|C cut, then according to the {\it range criterion} \cite{Horodecki97} there exists a set of product vectors $\{\psi_i \otimes \phi_k\}$ spanning the range space of $\rho^{[2]}_{ABC}(b)$ such that \{$\psi_i \otimes \phi_k^*\}$ span the range space of $\left[\rho^{[2]}_{ABC}(b)\right]^{\mathrm{T}_C}$ or any of the vectors \{$\psi_i \otimes \phi_k\}~\left(\{\psi_i \otimes \phi_k^*\}\right)$
belongs to the range of $\rho^{[2]}_{ABC}(b)~\left(\left[\rho^{[2]}_{ABC}(b)\right]^{\mathrm{T}_C}\right)$, with $\psi_i \in \mathbb{C}^4$ and $\phi_k \in \mathbb{C}^2$. Without any loss of generality, the elements of the set  \{$\psi_i \otimes \phi_k\}$ can be written as:
\begin{subequations}\label{r1}
 \begin{align}
\label{range vectors1}  v_1&=(\alpha,\beta,\gamma,\delta)^{\mathrm{T}}\otimes(1,0)^{\mathrm{T}},\\
\label{range vectors2}  v_2&=(\alpha,\beta,\gamma,\delta)^{\mathrm{T}}\otimes(0,1)^{\mathrm{T}}, \\
\label{range vectors3}  v_3&=(\alpha,\beta,\gamma,\delta)^{\mathrm{T}}\otimes(1,t)^{\mathrm{T}},
\end{align}
\end{subequations}
where $\alpha,\beta,\gamma,\delta,t \in \mathbb{C};~\&~ t\neq 0$. Comparing Eqs. (\ref{r1}) with Eq. (\ref{range general}) we obtain,   
\begin{subequations}\label{r2} 
\begin{align}
\label{range vectors final1}  v_1&=(A,0,D,0)^{\mathrm{T}}\otimes(1,0)^{\mathrm{T}},\\
\label{range vectors final2}  v_2&=(0,C,0,F)^{\mathrm{T}}\otimes(0,1)^{\mathrm{T}}, \\
\label{range vectors final3}  v_3&=(A,tA,D,tD)^{\mathrm{T}}\otimes(1,t)^{\mathrm{T}}.
\end{align}
\end{subequations}
Therefore, the partial complex conjugations of vectors in Eq.\ref{r2} are obtained as,
\begin{subequations}
\label{partial conjugate}
\begin{align}
\label{range conjugate1}  v_1^*&=(A,0,D,0)^{\mathrm{T}}\otimes(1,0)^{\mathrm{T}},\\
\label{range conjugate2}  v_2^*&=(0,C,0,F)^{\mathrm{T}}\otimes(0,1)^{\mathrm{T}}, \\
\label{range conjugate3}  v_3^*&=(A,tA,D,tD)^{\mathrm{T}}\otimes(1,t^*)^{\mathrm{T}}
\end{align}
\end{subequations}.
 These vectors should span range of $\left[\rho^{[2]}_{ABC}(b)\right]^{\mathrm{T}_C}$. Now, consider the vector $u=(0,0,\frac{b}{3+17b},0,0,\frac{2b}{3+17b},\frac{2b+\sqrt{1-b^2}}{6+34b},0)^{\mathrm{T}}$ in the range of $\left[\rho^{[2]}_{ABC}(b)\right]^{\mathrm{T}_C}$. This particular vector cannot be spanned by $\left\{v_1^*,v_2^*,v_3^*\right\}$ and hence leads to the fact that $\rho^{[2]}_{ABC}(b)$ is inseparable across AB|C cut. Thus, $\rho^{[2]}_{ABC}(b)\not\in\mathcal{B}(C|AB)$ and hence $\rho^{[2]}_{ABC}(b)\not\in\mathcal{B}^{int}(2\otimes2\otimes2)$ for $b\in(\sim0.8184,1]$. This completes the proof. 
\end{proof}
\begin{remark}\label{remark1}
Proposition \ref{prop1} and Proposition \ref{prop2} yield the proof for Theorem \ref{theo1} for $(\mathbb{C}^d)^{\otimes3}$ with $d\ge3$ and $(\mathbb{C}^2)^{\otimes3}$, respectively. Given these two Propositions it is also not hard to see that Theorem \ref{theo1} also holds for any tripartite Hilbert space $\mathbb{C}^{d_1}\otimes\mathbb{C}^{d_2}\otimes\mathbb{C}^{d_3}$ with $\min\{d_1,d_2,d_3\}\ge2$. For arbitrary $d_1,d_2,d_3$, construct the state $\rho^{[d_m]}_{ABC}$ either as of Proposition \ref{prop1} if $d_m:=\min\{d_1,d_2,d_3\}\ge3$ or as of Proposition \ref{prop2} if $d_m=2$. Clearly, $\rho^{[d_m]}_{ABC}\in\mathcal{P}^{int}(d_1\otimes d_2\otimes d_3)$ but $\rho^{[d_m]}_{ABC}\not\in\mathcal{B}^{int}(d_1\otimes d_2\otimes d_3)$.
\end{remark}
Theorem \ref{theo1} reveals nontrivial geometric implication regarding the state space structure of tripartite Hilbert spaces by established proper set inclusion relations among different convex sets of states (see Figure \ref{figure1}).

\section{Concluding remarks and future outlook}\label{sec4}
We have studied the intricate state space structure of multipartite quantum systems. In particular, we have shown that the intersection of three sets of biseparable states (across three different bipartite cuts) is a strict subset of the intersection of three sets of PPT states for tripartite Hilbert spaces $\mathbb{C}^{d_1}_A\otimes\mathbb{C}^{d_2}_B\otimes\mathbb{C}^{d_3}_C$ with  $\min\{d_1,d_2,d_3\}\ge2$. We establish this strict set inclusion relation by explicit construction of states that belongs to the set $\mathcal{P}^{int}(d_1\otimes d_2\otimes d_3)$ but not to $\mathcal{B}^{int}(d_1\otimes d_2\otimes d_3)$. At this point, the work by Eggeling \& Werner \cite{Eggeling01} is worth mentioning. There the authors studied state space structure for tripartite quantum systems by considering a particular class of states that commute with unitaries of the form $U\otimes U\otimes U$. While for three qubit system it turns out that $\mathcal{P}^{int}(2\otimes 2\otimes 2)=\mathcal{B}^{int}(2\otimes 2\otimes 2)$ if we limit within the $U\otimes U\otimes U$ invariant class, our results show that this is not the case for generic state space. Present study thus  provides new understanding towards the multipartite state space structure as well as the multipatite entanglement behavior and adds to the previous results established in \cite{Eggeling01,Ha16,Bennett99}.

Our work welcomes further questions for future study. First of all, we have only shown that the convex sets of states $\mathcal{P}^{int}(ABC)$ and $\mathcal{B}^{int}(ABC)$ are not identical. Now according to the classic Minkowski-Krein–Milman theorem, we know that every convex (and  compact) set in a Euclidean space (or more generally in a locally convex topological vector space) is the convex hull of its extreme points \cite{Krein40} (see also \cite{Zyczkowski06}). Since $\mathcal{P}^{int}(ABC)\neq\mathcal{B}^{int}(ABC)$, the sets of extreme point of these sets are also different, {\it i.e.}  $\mathcal{E}_{\mathcal{P}^{int}}(ABC)\neq\mathcal{E}_{\mathcal{B}^{int}}(ABC)$. Characterizing these extreme points will provide us a more detailed picture regarding the tripartite state space structure. In this respect, the work of \cite{Halder19} is worth mentioning, where the authors have shown that a $(d-3)/2$-simplex is sitting on the boundary between the set $\mathcal{P}(AB)$ and the set of non-PPT states for the Hilbert space $\mathbb{C}^d_A\otimes\mathbb{C}^d_B$ for odd $d$ with $d\ge3$. Our study also motivates questions on quantum dynamics. For the bipartite case, researchers have identified entanglement breaking completely positive trace-preserving maps (channels) $\mathcal{N}$ such that $(\mathcal{I}\otimes\mathcal{N})[\rho_{AB}]\in\mathcal{S}(AB)~\forall~\rho_{AB}\in\mathcal{D}(AB)$ \cite{Ruskai02,Horodecki2003(1)}. Similarly, here one might be interested in the classes of channels that map any tripartite state to the sets $\mathcal{F}(ABC)/\mathcal{B}^{int}(ABC)/\mathcal{P}^{int}(ABC)$, {\it i.e.} $(\mathcal{I}\otimes\mathcal{I}\otimes\mathcal{N})[\rho_{ABC}]\in\mathcal{F}(ABC)/\mathcal{B}^{int}(ABC)/\mathcal{P}^{int}(ABC)~\forall~\rho_{ABC}\in\mathcal{D}(ABC)$. Finally, comprehending the state space structure for an arbitrary number of systems is far from complete. 

{\bf Acknowledgements:} We would like to thank Prof. R. F. Werner for pointing out the relevant reference \cite{Eggeling01}. MB would like to thank Arup Roy for fruitful discussions. MB acknowledges research grant through INSPIRE-faculty fellowship from the Department of Science and Technology, Government of India.

\end{document}